\newtheorem{thm}{Theorem}
\newtheorem{cor}{Corollary}
\newtheorem{prop}{Proposition}
\newcommand{\nt}{{\mathbb N}}
\title{\LARGE \bf
Privacy Verification in POMDPs via Barrier Certificates
}
\author{Mohamadreza Ahmadi$^{*}$ \and   Bo Wu$^{*}$ \and Hai Lin \and Ufuk Topcu
\thanks{ 
M. Ahmadi and U. Topcu are with the Department of Aerospace Engineering and Engineering Mechanics, and the Institute for Computational Engineering and Sciences (ICES), University of Texas, Austin, 201 E 24th St, Austin, TX 78712. B. Wu and H. Lin are with the Department of Electrical Engineering, University of Notre Dame, Notre Dame, IN 46556, USA. e-mail: (\{mrahmadi, utopcu\}@utexas.edu,\{bwu3, hlin1\}@nd.edu). 
}
\thanks{$^{*}$ M. Ahmadi and B. Wu contributed equally to this work.}
}
\begin{document}

\maketitle
\thispagestyle{empty}
\pagestyle{empty}

\begin{abstract}

Privacy is an increasing concern in cyber-physical systems that operates over a shared network. In this paper, we propose a method for privacy verification of cyber-physical systems modeled by Markov decision processes (MDPs) and partially-observable Markov decision processes (POMDPs) based on barrier certificates. To this end, we consider an opacity-based notion of privacy, which is characterized by the beliefs in system states. We show that the belief update equations can be represented as discrete-time switched systems, for which we propose a set of conditions for privacy verification  in terms of barrier certificates. We further demonstrate that, for MDPs and for POMDPs, privacy verification can be computationally implemented  by solving a set of semi-definite programs and sum-of-squares programs, respectively. The method is illustrated by an application to privacy verification of an inventory management system. 


\end{abstract}

\section{INTRODUCTION}

Privacy is becoming a rising concern in many modern engineering systems which are  increasingly connected over shared infrastructures, such as power grids \cite{mcdaniel2009security}, healthcare systems \cite{sahi2018privacy}, smart home \cite{courtney2008privacy}, transportation systems \cite{hubaux2004security}, and etc. Potentially malicious intruders may have access to the information available publicly or privately based on which they attempt to infer some ``secret'' associated with the system, such as personal activity preferences, health conditions, and  bank account details. If the privacy is compromised, it could lead to  substantial social or economic loss. Therefore, it is of fundamental importance to design  cyber-physical systems that are provably safe against privacy breaches. 

In recent years, a privacy notion called ``opacity" has received significant attention.  Generally speaking, opacity is a confidentiality property that characterizes a system's capability to conceal its ``secret" information from being inferred by outside observers. These observers are assumed to have full knowledge of the system model, often as a finite automaton, and can observe or partially observe the behaviors of the system, such as the actions performed, but not the states of the system  directly. Various notions of opacity, depending on whether the secret is the behavior of the system in regular languages, initial states, or the current states, have been proposed~\cite{wu2013comparative} and their verification and enforcement are studied in deterministic and probabilistic systems~\cite{jacob2016overview}.

Most existing results on opacity only consider the absolute certainty of  the occurrence of the secret as the privacy violation. However, in practice, in many (partially observable) probabilistic systems, the intruder may only maintain a belief over the system secrets through Bayesian inference, which may still pose a security threat if the intruder has a high confidence that a secret has been observed. Hence,  a new opacity notion was introduced in~\cite{wu2018privacy} for Markov decision processes (MDPs), where the system is considered opaque, if the intruder's confidence that the current state is a secret state never exceeds a given threshold.

In this paper, in addition to studying privacy verification in MDPs, we study  partially observable MDP (POMDP) models with the privacy metric based on opacity. POMDPs  generalize   MDPs with partial observability and are popular in sequential decision-making \cite{cassandra}. Existing studies on POMDPs mostly consider model checking against a given specification~\cite{chatterjee2016decidable}, or policy synthesis to optimize a given performance metric~\cite{junges2017permissive}. Privacy issues in POMDP planning have gained interest only recently. For example, in \cite{yao2015privacy}, privacy is quantified as the average conditional entropy  to be minimized while optimizing the task-related reward in a power grid. An accumulated discounted minimal Bayesian risk was defined in \cite{li2016privacy} as the privacy breach metric to be optimized. Like these two papers, most existing work focuses on privacy measures that are averaged over time. However, minimizing a time average may not be sufficient in some circumstances, because it does not guarantee that the intruder will not have a fairly high confidence about the secret at \emph{some time instant}. In contrast, our notion of privacy is supposed to be satisfied \emph{at any time}.


A key observation we use in \cite{wu2018privacy} is that the intruder's belief update dynamics can be characterized as an autonomous discrete-time switched system whose switching signals are the observed actions. Then, the privacy verification problem can be equivalently cast into verifying whether the  solutions of the belief switched system avoid a privacy unsafe  subset of the belief space, where the privacy specification is violated. 

Safety verification is a familiar subject to the control community~\cite{Guguen200925,1215682,Prajna2006117,7402508,AHMADI201733}. One of the methods for safety verification relies on the construction of a function of the states, called the \emph{barrier certificate} that satisfies a Lyapunov-like inequality~\cite{Prajna2006117}. The barrier certificates have shown to be useful in several system analysis and control problems running the gamut of bounding moment functional of stochastic systems~\cite{ahmadi2016optimization} to collision avoidance of multi-robot systems~\cite{7857061}. It was also shown in~\cite{7236867} that any safe dynamical system admits a barrier certificate. 

In this paper, we propose conditions for  privacy verification  of MDPs and POMDPs using barrier certificates. From a computational stand point, we formulate a set of semi-definite programs (SDPs) and sum-of-squares programs (SOSP) to verify the privacy requirement of MDPs and POMDPs, respectively. We apply the proposed method to a case study of privacy verification of an inventory management system. 

The rest of this paper is organized as follows. In the subsequent section, we present some definitions related to MDPs, POMDPs, belief dynamics and privacy. In Section~\ref{sec:privacy-verification-barrier}, we propose a set of conditions for privacy verification of belief equations represented as discrete-time switched systems based on barrier certificates. In Section~\ref{sec:privacy-mdps-pomdps}, we apply the method based on barrier certificates to the privacy verification problem of MDPs and POMDPs, and present a set of SDP and SOSP sufficient conditions, respectively. In Section~\ref{sec:privacy-example}, we elucidate the proposed privacy verification methodology with an inventory management example. Finally, in Section~\ref{sec:conclusions}, we conclude the paper and give directions for future research.

\textbf{Notation:}
The notations employed in this paper are relatively straightforward. $\mathbb{R}_{\ge 0}$ denotes the set $[0,\infty)$ and $\mathbb{Z}_{\ge 0}$ denotes the set  of integers $\{0,1,2,\ldots\}$. For a finite set $A$, we denote by $|A|$ the cardinality of the set $A$. Given a matrix $Q$, we denote by $Q^T$ the transpose of $Q$. The notation $0_{n \times m}$ is the $n \times m$ matrix with zero entries. For two vectors, $a$ and $b$ with the same size, $a \succeq b$ implies entry-wise inequality. $\mathcal{R}[x]$ accounts for the set of polynomial functions with real coefficients in $x \in \mathbb{R}^n$, $p: \mathbb{R}^n \to \mathbb{R}$ and $\Sigma \subset\mathcal{R}$ is the subset of polynomials with an SOS decomposition; i.e., $p \in  \Sigma[x]$ if and only if there are $p_i \in \mathcal{R}[x],~i \in \{1, \ldots ,k\}$ such that $p = p_i^2 + \cdots +p_k^2$.  


\section{Preliminaries}
\subsection{MDP}
MDPs \cite{puterman2014markov} are  decision-making modeling framework in which the actions have stochastic outcomes. An MDP $\mathcal{M}=(Q,\pi,A,T)$ has the following components:
	\begin{itemize}
		\item $Q$ is a finite set of states with indices $\{1,2,\ldots,n\}$.
		\item $\pi:Q\rightarrow[0,1]$ defines the distribution of the initial states, i.e., $\pi(q)$ denotes the probability of starting at $q\in Q$.
		\item $A$ is a finite set of actions.
		\item $T:Q\times A\times Q\rightarrow [0,1]$ is the probabilistic transition function where 
		\begin{multline}
		T(q,a,q'):=P(q_t=q'|q_{t-1}=q,a_{t-1}=a), \\
		\forall t\in\mathbb{Z}_{\ge 1}, q,q'\in Q, a\in A. \nonumber
		\end{multline}		 
	\end{itemize}

\subsection{POMDP}
POMDPs provide a more general mathematical framework to consider not only the stochastic outcomes of actions, but also the imperfect state observations~\cite{shani2013survey}. Formally,  a POMDP $\mathcal{P}=(Q,\pi,A,T,Z,O)$ is defined with the following components:

	\begin{itemize}
		\item $Q,\pi,A,T$ are the same as the definition of an MDP.
		\item $Z$ is the set of all possible observations. Usually $z\in Z$ is an incomplete projection of the world state $q$, contaminated by sensor noise.
		\item $O:Q\times A \times Z\rightarrow [0,1]$ is the observation function where
		\begin{multline}
		O(q,a,z):=P(z_t=z|q_{t}=q,a_{t-1}=a), \\
	    \forall t\in\mathbb{Z}_{\ge 1}, q\in Q, a\in A, z\in Z. \nonumber
		\end{multline}			 
	\end{itemize}
Furthermore, we assume that there is a set of secret states $Q_s \subset Q$ and we would like to conceal the information that the system is currently in some secret state $q\in Q_s$.

\subsection{Belief Update Equations as  Discrete-Time Switched~Systems}
In \cite{wu2018privacy}, we considered the case that given a system modeled as an MDP $\mathcal{M}$, there is an intruder with potentially malicious intention  that can observe the actions executed but not the states of the system, and tries to determine whether the system  is currently in some secret state with a high confidence. If all the actions are available at every state, then from the intruder's point of view, the system is actually a POMDP with trivially the same observation for every state (since it cannot observe the states at all). In this case, the intruder may  maintain a belief $b_{t-1}:Q\rightarrow [0,1],\sum_{q\in Q}b_t(q)=1$ over $Q$ at time $t-1$. The belief at $t=0$ is defined as $b_0(q)=\pi(q)$ and $b_t(q)$ denotes the probability of system being in state $q$ at time $t$. At time $t+1$, when action $a\in A$ is observed, the belief update can be described as
\begin{equation}\label{equation:MDP belief update}
b_{t}(q') = \sum_{q\in Q}P(q,a,q')b_{t-1}(q).
\end{equation}

We also consider systems modeled as  a POMDP, where we assume that the intruder may have access to the observations in addition to  the executed actions. Therefore, the intruder has to consider a complete history of the past actions and observations to update its belief
%
with Bayes rule: 
\begingroup\makeatletter\def\f@size{9}\check@mathfonts
\begin{align} \label{equation:belief update}
\begin{split}
&b_t(q')
=P(q'|z_t,a_{t-1},b_{t-1})\\
&= \frac{P(z_t|q',a_{t-1},b_{t-1})P(q'|a_{t-1},b_{t-1})}{P(z_t|a_{t-1},b_{t-1})}\\
&= \frac{P(z_t|q',a_{t-1},b_{t-1})\sum_{q\in Q}P(q'|a_{t-1},b_{t-1},q)P(q|a_{t-1},b_{t-1})}{P(z_t|a_{t-1},b_{t-1})} \\
&=\frac{O(q',a_{t-1},z_{t})\sum_{q\in Q}T(q,a_{t-1},q')b_{t-1}(q)}{\sum_{q'\in Q}O(q',a_{t-1},z_{t})\sum_{q\in Q}T(q,a_{t-1},q')b_{t-1}(q)}.
\end{split}
\end{align}
\endgroup


\subsection{Privacy in Belief Space}


Our notion of privacy is defined on the belief space of the intruder, where we require that the intruder, even with  access to the actions and observations since $t=0$, is never confident that the system is in a secret state with a probability larger than or equal to a constant $\lambda \in [0,1]$, at any time $t$:
\begin{equation}\label{equation:privacy requirement}
\sum_{q\in Q_s}b_t(q)\leq\lambda,\forall t.
\end{equation}

The notion of privacy used in this paper is closely related to the current-state opacity (CSO) in discrete event systems~\cite{jacob2016overview}. The CSO definition provides a deterministic notion of privacy in that privacy is breached when the intruder is absolutely sure that the system is currently in a secret state. On the other hand,  in our formulation, the privacy requirement is violated when the intruder is confident with a probability over some threshold.

\section{Privacy Verification Using \\Barrier Certificates}\label{sec:privacy-verification-barrier}

The belief update equations for MDPs~\eqref{equation:MDP belief update} and POMDPs~\eqref{equation:belief update} are discrete-time switched system where the actions $a \in A$ define the switching modes. In the sequel, we develop a technique based on barrier certificates for privacy verification of belief update equations  represented as  discrete-time switched systems.


Consider the following  belief dynamics  
\begin{equation}\label{equation:discretesystem}
b_t = f_a(b_{t-1}),
\end{equation}
where $b$ denote the belief vector belonging to the belief space hyper-cube $[0,1]^{|Q|}$, $a \in A$ is the action that can be interpreted as the switching mode index, $t \in \mathbb{Z}_{\ge 1}$ denote the discrete time instances, the vector fields $\{f_{a}\}_{a \in A}$ with $f_a: [0,1]^{|Q|} \to [0,1]^{|Q|} $, and $b_0 \in \mathcal{B}_0 \subset [0,1]^{|Q|}$ representing the set of initial beliefs. We also define a privacy unsafe set $\mathcal{B}_u \subset [0,1]^{|Q|}$, where the privacy requirement is violated. Verifying whether all the belief evolutions of~\eqref{equation:discretesystem} starting at $\mathcal{B}_0$ avoid a given privacy unsafe set $\mathcal{B}_u$ at a pre-specified time $T$ or for all time is a cumbersome task in general and requires simulating~\eqref{equation:discretesystem} for all elements of the set $\mathcal{B}_0$ and for different sequences of $a \in A$. Furthermore, POMDPs are often computationally intractable to solve exactly~\cite{Hauskrecht2000}. To surmount these challenges, we demonstrate that we can find a barrier certificate which verifies that a given privacy requirement is not violated without the need to solve the belief update equations or the POMDPs directly. 

\begin{thm}\label{theorem-barrier-discrete}
Consider the belief update equation~\eqref{equation:discretesystem}. Given a set of initial beliefs $\mathcal{B}_0 \subset [0,1]^{|Q|}$, an unsafe set $\mathcal{B}_u \subset [0,1]^{|Q|}$ ($\mathcal{B}_0 \cap \mathcal{B}_u = \emptyset$), and a constant $T$, if there exists a function $B:\mathbb{Z} \times [0,1]^{|Q|} \to \mathbb{R}$ such that
\begin{equation}\label{equation:barrier-condition1}
B(T,b_T)  > 0, \quad \forall b_T \in \mathcal{B}_u,
\end{equation}
\begin{equation}\label{equation:barrier-condition11}
 B(0,b_0) < 0, \quad \forall b_0 \in \mathcal{B}_0,
\end{equation}
and
\begin{multline}\label{equation:barrier-condition2}
B(t,f_a(b_{t-1})) - B(t-1,b_{t-1}) \le 0, \\ \forall t \in \{1,2,\ldots,T\},~\forall a \in A,
\end{multline}
then there exist no solution of belief update equation~\eqref{equation:discretesystem} such that $b_0 \in \mathcal{B}_0$, and $b_T \in \mathcal{B}_u$ for all $a \in A$.
\end{thm}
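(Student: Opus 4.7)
The plan is to proceed by contradiction, using the three barrier conditions to derive inconsistent sign information at the terminal time $T$. Assume, for contradiction, that there exists some sequence of actions $a_1, a_2, \ldots, a_T \in A$ and an initial point $b_0 \in \mathcal{B}_0$ such that the corresponding solution $\{b_t\}_{t=0}^{T}$ of~\eqref{equation:discretesystem}, defined by $b_t = f_{a_t}(b_{t-1})$, satisfies $b_T \in \mathcal{B}_u$.

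The first step is to use condition~\eqref{equation:barrier-condition11} to note that $B(0,b_0) < 0$, since $b_0 \in \mathcal{B}_0$. The second step is a telescoping/induction argument along the trajectory: for each $t \in \{1,\ldots,T\}$, since $b_t = f_{a_t}(b_{t-1})$ and condition~\eqref{equation:barrier-condition2} holds for \emph{every} $a \in A$ (including the particular $a_t$ used along this trajectory), we obtain
\begin{equation*}
B(t,b_t) - B(t-1,b_{t-1}) = B(t,f_{a_t}(b_{t-1})) - B(t-1,b_{t-1}) \le 0.
\end{equation*}
Summing these telescoping inequalities from $t=1$ to $t=T$ yields $B(T,b_T) \le B(0,b_0) < 0$. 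On the other hand, condition~\eqref{equation:barrier-condition1} applied at the point $b_T \in \mathcal{B}_u$ forces $B(T,b_T) > 0$, a contradiction. Hence no such trajectory can exist, which proves the claim.

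The argument is essentially routine once the observation is made that condition~\eqref{equation:barrier-condition2} is required to hold uniformly over all actions $a \in A$, so it applies regardless of which switching signal the belief dynamics actually follow; this is what allows the barrier certificate to certify safety against \emph{all} adversarial action sequences simultaneously without ever simulating~\eqref{equation:discretesystem}. The only mild subtlety, and the step I would double-check, is that conditions~\eqref{equation:barrier-condition1}--\eqref{equation:barrier-condition11} provide a strict sign separation (strict inequalities on both the unsafe and initial sets) while the propagation condition~\eqref{equation:barrier-condition2} is non-strict; the strict gap at the endpoints is exactly what allows the telescoped non-strict inequalities to produce a strict contradiction, so no issues arise. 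No real obstacle is expected; the statement is a discrete-time, time-varying analogue of the standard barrier certificate argument, and the proof is a short three-line chain once the setup is in place.
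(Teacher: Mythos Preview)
Your proposal is correct and follows essentially the same route as the paper's own proof: a contradiction argument in which condition~\eqref{equation:barrier-condition2} is applied along the assumed trajectory to obtain $B(T,b_T)\le B(0,b_0)<0$, contradicting~\eqref{equation:barrier-condition1}. If anything, your write-up is slightly more explicit about the telescoping step and the role of the strict versus non-strict inequalities than the paper's version.
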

\begin{proof}
The proof is carried out by contradiction. Assume at time instance $T$ there exit a solution to \eqref{equation:discretesystem} such that $b_0 \in \mathcal{B}_0$ and $b_T \in \mathcal{B}_u$. From inequality~\eqref{equation:barrier-condition2}, we have 
$$
B(t,b_t) \le B(t-1,b_{t-1})
$$
for all $t\in \{1,2,\ldots,T\}$ and all actions $a \in A$. Hence, $B(t,b_t) \le B(0,b_0)$ for all $t \in \{1,2,\ldots,T\}$. Furthermore, inequality~\eqref{equation:barrier-condition11} implies that 
$$
B(0,b_0) < 0 
$$
for all $b_0 \in \mathcal{B}_0$.  Since the choice of $T$ can be arbitrary, this is a contradiction because it implies that $B(T,b_T) \le B(0,b_0) < 0$. Therefore, there exist no solution of \eqref{equation:discretesystem} such that $b_0 \in \mathcal{B}_0$ and $b_T \in \mathcal{B}_u$ for any sequence of actions $a \in A$.
\end{proof}

Theorem~\ref{theorem-barrier-discrete} checks whether the privacy requirement is not violated at a particular point in time $T$. We can generalize this theorem to the case for verifying privacy for all time. In this case, the barrier certificate is time-invariant. 

\begin{cor}\label{theorem-barrier-discrete-alltime}
Consider the belief switched dynamics~\eqref{equation:discretesystem}. Given a set of initial conditions $\mathcal{B}_0 \subset [0,1]^{|Q|}$, and an unsafe set $\mathcal{B}_u \subset [0,1]^{|Q|}$ ($\mathcal{B}_0 \cap \mathcal{B}_u = \emptyset$), if there exists a function $B: [0,1]^{|Q|} \to \mathbb{R}$ such that
\begin{equation}\label{equation:barrier-condition--1}
B(b)  > 0, \quad \forall b \in \mathcal{B}_u,
\end{equation}
\begin{equation}\label{equation:barrier-condition--11}
 B(b) < 0, \quad \forall b \in \mathcal{B}_0,
\end{equation}
and
\begin{equation}\label{equation:barrier-condition--2}
B\left(f_a(b_{t-1})\right) - B(b_{t-1}) \le 0,     
\end{equation}
then there exist no solution of~\eqref{equation:discretesystem} such that $b_0 \in \mathcal{X}_0$ and $b_t \in \mathcal{X}_u$ for all $t \in \mathbb{Z}_{\ge 1}$ and any sequence of actions $a \in A$. Hence, the privacy requirement is not violated.
\end{cor}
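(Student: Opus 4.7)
The plan is to mirror the contradiction argument used in Theorem~\ref{theorem-barrier-discrete}, but exploit that the barrier $B$ is now time-invariant so that the conclusion extends to every $t\in\mathbb{Z}_{\ge 1}$ rather than a fixed horizon $T$. I would assume, for the sake of contradiction, that there exists some time index $t^{*}\in\mathbb{Z}_{\ge 1}$, an initial belief $b_{0}\in\mathcal{B}_{0}$, and an action sequence $a_{0},a_{1},\dots,a_{t^{*}-1}\in A$ producing a trajectory $\{b_{t}\}$ of~\eqref{equation:discretesystem} with $b_{t^{*}}\in\mathcal{B}_{u}$.

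Next, I would apply~\eqref{equation:barrier-condition--2} along this trajectory. Because inequality~\eqref{equation:barrier-condition--2} is required to hold for every $a\in A$, we have $B(b_{t})=B(f_{a_{t-1}}(b_{t-1}))\le B(b_{t-1})$ at each step. A straightforward induction on $t$ then gives
\begin{equation}
B(b_{t^{*}})\;\le\;B(b_{t^{*}-1})\;\le\;\cdots\;\le\;B(b_{0}).
\end{equation}
Condition~\eqref{equation:barrier-condition--11} yields $B(b_{0})<0$, so the chain above forces $B(b_{t^{*}})<0$. On the other hand, $b_{t^{*}}\in\mathcal{B}_{u}$ combined with~\eqref{equation:barrier-condition--1} forces $B(b_{t^{*}})>0$, a contradiction. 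Hence no such trajectory exists, which is exactly the claim that for every $b_{0}\in\mathcal{B}_{0}$ and every action sequence, the belief stays outside $\mathcal{B}_{u}$ for all $t\in\mathbb{Z}_{\ge 1}$, so the privacy requirement is not violated.

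There is not really a hard step here; the argument is essentially a reparameterization of the previous theorem with the time argument of $B$ suppressed. The only minor subtlety worth flagging explicitly is that~\eqref{equation:barrier-condition--2} must be interpreted as a \emph{universal} inequality over $a\in A$ and over $b_{t-1}\in[0,1]^{|Q|}$ (not only along one pre-chosen action sequence), so that the inductive step applies uniformly regardless of which switching signal the intruder's observations induce. With that universal quantification in place, the contradiction above goes through for any $t^{*}$, giving the for-all-time conclusion. Since the statement is a corollary of Theorem~\ref{theorem-barrier-discrete}, one could alternatively derive it by defining $\tilde{B}(t,b):=B(b)$ and invoking the theorem for arbitrary $T$; I would mention this as a one-line alternative after the direct proof.
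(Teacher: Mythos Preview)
Your argument is correct and is precisely the intended one: the paper states this result as a corollary of Theorem~\ref{theorem-barrier-discrete} without supplying a separate proof, so your direct contradiction (or equivalently your one-line reduction via $\tilde{B}(t,b):=B(b)$ applied for arbitrary $T$) is exactly what the authors have in mind.
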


\section{Privacy Verification in MDPs and POMDPs}\label{sec:privacy-mdps-pomdps}

In the previous section, we discussed conditions for privacy verification of general belief update equations using barrier certificates. Next, we show that the barrier certificates can be used for privacy verification of MDPs and POMDPs. To this end, we define the privacy unsafe set $\mathcal{B}_u$ to be the complement of the privacy requirement~\eqref{equation:privacy requirement} inspired by the notion of opacity. That is,
\begin{equation}\label{eq:privacy-unsafe-set}
\mathcal{B}_u = \left \{ b \in \mathbb{R}^{|Q|} \mid \sum_{q\in Q_s} b_t(q)> \lambda  \right\}.
\end{equation}
Hence, given a set of initial beliefs $\mathcal{B}_0$, if there exists a barrier certificate verifying privacy with respect to $\mathcal{B}_u$, then we infer that the privacy requirement is satisfied, i.e., \\$\sum_{q\in Q_s}b_t(q)\le \lambda $.

In the following, we formulate a set of conditions in terms of SDPs or SOSPs (refer to Appendix~\ref{app:SOS} for more details on SOSPs) to verify whether a given MDP or a POMDP, respectively,  satisfies a privacy requirement. 


\subsection{Privacy Verification for MDPs via SDPs}

For MDPs, the belief update equation can be described as  a linear discrete-time switched system 
\begin{equation}\label{equation:MDP belief update2}
b_{t+1}(q') = H_a~b_{t}(q') = \sum_{q\in Q}P(q,a,q')b_t(q),
\end{equation}
where $H_a \in \mathbb{R}^{|Q| \times |Q|}$, $a \in A$. Furthermore, the privacy requirement~\eqref{eq:privacy-unsafe-set} describes a  half-space in the belief space hyper-cube.  Denote by $\bar{b}$ the augmentation of the belief states by $1$, i.e., $\bar{b} = \begin{bmatrix} b^T & 1 \end{bmatrix}^T \in \mathbb{R}^{|Q|+1}$. We define the set of initial beliefs to be a convex polytope represented by the intersection of a set of  half-spaces in the augmented belief space
\begin{equation} \label{eq:initial-linear}
\mathcal{B}_0 = \left\{ b_0 \in \mathbb{R}^{|Q|} \mid \bar{E}_0^T \bar{b}_0 \succeq 0_{n_0} \right\},
\end{equation}
where $\bar{E}_0 \in \mathbb{R}^{n_0 \times (|Q|+1)} $.

The privacy unsafe set  can be re-written, respectively, as
\begin{equation} \label{eq:unsafe-linear}
\mathcal{B}_u = \left\{ b \in \mathbb{R}^{|Q_s|} \mid  \bar{b}^T \bar{W} \bar{b} > 0\right\},
\end{equation}
where $$\bar{W} = \begin{bmatrix} 0_{|Q| \times |Q|} & 0_{1\times 1} \\
 \boldsymbol{w}^T & -\lambda \end{bmatrix},  $$
with $\boldsymbol{w} \in \mathbb{R}^{|Q|}$ and $\boldsymbol{w}(i)= 1$ for $i=q \in Q_s$ and $\boldsymbol{w}(i)= 0$ otherwise.


At this point, we are ready to state the SDP conditions for verifying privacy of a given MDP. 

\begin{cor}\label{cor:LMI-privacy}
Consider the MDP belief update dynamics as given in~\eqref{equation:MDP belief update2}, the unsafe set~\eqref{eq:unsafe-linear}, and the set of initial beliefs~\eqref{eq:initial-linear}. If there exist a matrix $V \in \mathbb{S}^{{|Q|}+1}$, a  matrix with positive entries  $U \in \mathbb{S}^{n_0}$, and a positive constant $s^u$ such that
\begin{equation} \label{eq:LMI1}
V - s^u \bar{W}  >0,
\end{equation}
\begin{equation}\label{eq:LMI2}
-V -  \bar{E}_0 U \bar{E}_0^T >0,
\end{equation}
and 
\begin{equation}\label{eq:LMI3}
H^T_a V H_a - V <0,~~\forall a \in A,
\end{equation}
then the privacy requirement~\eqref{equation:privacy requirement} is satisfied for all time.
\end{cor}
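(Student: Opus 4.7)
The plan is to invoke Corollary~\ref{theorem-barrier-discrete-alltime} with the quadratic barrier candidate $B(b) = \bar{b}^T V \bar{b}$, where $\bar{b} = [b^T, 1]^T$, and to derive each of its three hypotheses from the three LMIs by an S-procedure style manipulation. The overall strategy is therefore: guess the form of $B$, check the three pointwise inequalities on $\mathcal{B}_u$, $\mathcal{B}_0$, and along the dynamics, and then invoke the corollary to conclude.

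For~\eqref{equation:barrier-condition--1}, I pre- and post-multiply LMI~\eqref{eq:LMI1} by $\bar{b}$ to obtain $\bar{b}^T V \bar{b} > s^u\, \bar{b}^T \bar{W} \bar{b}$ for every $\bar{b} \ne 0$. Since every $b \in \mathcal{B}_u$ satisfies $\bar{b}^T \bar{W} \bar{b} > 0$ and $s^u > 0$, this gives $B(b) > 0$ on the unsafe set. For~\eqref{equation:barrier-condition--11}, LMI~\eqref{eq:LMI2} yields $\bar{b}^T V \bar{b} < -\bar{b}^T \bar{E}_0 U \bar{E}_0^T \bar{b}$ for every $\bar{b}\ne 0$. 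Given any $b_0 \in \mathcal{B}_0$, the vector $y := \bar{E}_0^T \bar{b}_0$ has non-negative entries by definition of $\mathcal{B}_0$, so because $U$ has positive entries, $y^T U y = \sum_{i,j} U_{ij}\, y_i y_j \ge 0$, and therefore $B(b_0) < -y^T U y \le 0$.

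For the one-step inequality~\eqref{equation:barrier-condition--2}, I interpret the dynamics in the augmented belief space: set $\bar{H}_a = \mathrm{diag}(H_a, 1)$ so that the augmented update is $\overline{f_a(b)} = \bar{H}_a \bar{b}$. Then $B(f_a(b)) - B(b) = \bar{b}^T(\bar{H}_a^T V \bar{H}_a - V)\bar{b}$, and LMI~\eqref{eq:LMI3} (read with the obvious augmentation of $H_a$) makes this strictly negative for every $\bar{b}\ne 0$ and every $a \in A$, which in particular implies $\le 0$. With all three hypotheses of Corollary~\ref{theorem-barrier-discrete-alltime} verified, no belief trajectory starting in $\mathcal{B}_0$ can ever reach $\mathcal{B}_u$, which is exactly the privacy requirement~\eqref{equation:privacy requirement}.

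The main subtlety I anticipate is the initial-set step: it is the entry-wise positivity of $U$ (rather than positive semi-definiteness), combined with the entry-wise non-negativity of $\bar{E}_0^T \bar{b}_0$ on $\mathcal{B}_0$, that makes the quadratic form $\bar{b}_0^T \bar{E}_0 U \bar{E}_0^T \bar{b}_0$ non-negative and hence produces $B(b_0) < 0$. In contrast, the unsafe-set step is a single-inequality S-procedure and is entirely standard, and the dynamics step reduces to a discrete-time Lyapunov-type inequality once the augmentation of $H_a$ is made explicit. I would make sure the write-up keeps this asymmetry between single-inequality and multi-inequality S-procedures transparent, since that is where a careless reading of the hypothesis on $U$ could obscure the argument.
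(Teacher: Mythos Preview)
Your proposal is correct and follows essentially the same route as the paper: take the quadratic candidate $B(b)=\bar b^{T}V\bar b$ and derive the three hypotheses of Corollary~\ref{theorem-barrier-discrete-alltime} from \eqref{eq:LMI1}--\eqref{eq:LMI3} via S-procedure--style pre/post-multiplication by $\bar b$. Your treatment is in fact a bit more explicit than the paper's on two points---you spell out why entrywise positivity of $U$ together with $\bar E_0^{T}\bar b_0\succeq 0$ forces $\bar b_0^{T}\bar E_0 U\bar E_0^{T}\bar b_0\ge 0$ (the paper just cites an ellipsoidal-containment construction), and you flag the necessary augmentation $\bar H_a=\mathrm{diag}(H_a,1)$ to make the dimensions in \eqref{eq:LMI3} match, which the paper leaves implicit.
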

\begin{proof}
We show that each of the SDP conditions of \eqref{eq:LMI1}-\eqref{eq:LMI3} correspond to conditions~\eqref{equation:barrier-condition--1}-\eqref{equation:barrier-condition--2}, respectively, for the barrier certificate
$$
B(\bar{b}) = \bar{b}^T(q)~V~\bar{b}(q).
$$
Multiplying both sides of~\eqref{eq:LMI1} from left and right respectively with $\bar{b}^T(q)$ and $\bar{b}(q)$, respectively, gives
$$
\bar{b}^T(q) V \bar{b}(q) - s^u \bar{b}^T(q) \bar{W}  \bar{b}(q) >0.
$$
Since $s^u>0$, from S-procedure, we conclude that $\bar{b}^T(q) V \bar{b}(q)>0$ only if  $\bar{b}^T(q) \bar{W}   \bar{b}(q)>0$ (because $\bar{b}^T V \bar{b}(q) > s^u \bar{b}^T(q) \bar{W}   \bar{b}(q) $). Moreover,  $\bar{b}^T(q) \bar{W}   \bar{b}(q)>0 $ implies that  $\sum_{q\in Q_s}~b_t(q)> \lambda$. Therefore, condition~\eqref{equation:barrier-condition--1} is satisfied. Similarly, we can show, via S-procedure~\cite{S003614450444614X}, that if the linear matrix inequality~\eqref{eq:LMI2} is satisfied, condition~\eqref{equation:barrier-condition--11} holds. This is due to the fact that the polytope $\mathcal{B}_0$ is contained in the ellipsoid represented by $\bar{b}^T \bar{E}_0 U \bar{E}_0^T \bar{b} > 0$ and the positive entries of $U$ are the S-procedure coefficients based on the construction in~\cite[p. 76]{Johanssonthesis}.

Finally, multiplying both sides of~\eqref{eq:LMI3} from left and right respectively with $\bar{b}^T(q)$ and $\bar{b}(q)$ yields
 $$
\bar{b}^T(q) \left( H^T_a V H_a - V \right) \bar{b}(q) <0,~~\forall a \in A.
 $$
 That is,
  $$
\bar{b}^T(q)  H^T_a V H_a \bar{b}(q) - \bar{b}^T(q) V  \bar{b}(q) <0,~~\forall a \in A,
 $$
 which in turn implies that~\eqref{equation:barrier-condition--2} holds for $B(\bar{b}) = \bar{b}^T(q)~V~\bar{b}(q)$. Therefore, from Corollary~\ref{theorem-barrier-discrete-alltime}, the solutions of the MDP belief update equation~\eqref{equation:MDP belief update2} are safe with respect to the privacy unsafe set~\eqref{eq:unsafe-linear}. Hence,  the privacy requirement is satisfied.
 
\end{proof}


\subsection{Privacy Verification for POMDPs via SOSP}

The belief update equation~\eqref{equation:belief update} for a POMDP is a rational function in the belief states $b_t(q)$, $q \in Q_s$
\begin{multline}\label{eq:belief-update-rational}
b_t(q') = \frac{S_a\left( b_{t-1}(q') \right)}{R_a\left( b_{t-1}(q') \right)} \\
= \frac{O(q',a_{t-1},z_{t})\sum_{q\in Q}T(q,a_{t-1},q')b_{t-1}(q)}{\sum_{q'\in Q}O(q',a_{t-1},z_{t})\sum_{q\in Q}T(q,a_{t-1},q')b_{t-1}(q)}
\end{multline}

 Moreover, the privacy unsafe set~\eqref{eq:privacy-unsafe-set} is a semi-algebraic set, since it can be described by a polynomial inequality. We further assume the set of initial beliefs is also given by a semi-algebraic set
\begin{equation}\label{eq:intitial-belief-semialgebraic}
\mathcal{B}_0 = \bigg \{ b_0 \in \mathbb{R}^{|Q_s|} \mid l_i^0(b_0) \le 0,~i = 1,2,\ldots, n_0 \bigg\}.
\end{equation}
At this stage, we are ready to present conditions based on SOSP to verify privacy of a given POMDP. 

\begin{cor}
Consider the POMDP belief update dynamics~\eqref{eq:belief-update-rational}, the privacy unsafe set~\eqref{eq:privacy-unsafe-set}, the set of initial beliefs \eqref{eq:intitial-belief-semialgebraic}, and a constant $T>0$. If there  exist polynomial functions $B \in \mathcal{R}[t,b]$ with degree $d$,   $p^u \in {\Sigma}[b]$,  $p_i^0 \in {\Sigma}[b]$, $i = 1,2,\ldots, n_0$, and constants $s_1,s_2>0$ such that
\begin{equation}\label{eq:setssos1}
B\left(T,b_T\right) -  p^u(b_T) \left(  \sum_{q\in Q_s}b_T(q) - \lambda \right)- s_1 \in \Sigma \left[b_T\right],
\end{equation}
\begin{equation}\label{eq:setssos2}
-B\left(0,b_0\right) + \sum_{i=1}^{n_0} p_i^0(b_0) l_i^0(b_0) - s_2 \in \Sigma \left[b_0\right],
\end{equation}
and 
\begin{multline}\label{eq:setssos3}
- {R_a\left( b_{t-1} \right)}^d\left(B\left(t,\frac{S_a\left( b_{t-1} \right)}{R_a\left( b_{t-1} \right)} \right) - B(t-1,b_{t-1}) \right)  \\\in \Sigma[t,b_{t-1}],~~\forall  t \in \{1,2,\ldots,T\},
\end{multline}
then the privacy requirement~\eqref{equation:privacy requirement} is satisfied for all $t\in \{1,2,\ldots,T\}$.
\end{cor}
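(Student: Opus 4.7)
The plan is to show that conditions \eqref{eq:setssos1}--\eqref{eq:setssos3} are SOS tightenings of the barrier conditions \eqref{equation:barrier-condition1}--\eqref{equation:barrier-condition2} in Theorem~\ref{theorem-barrier-discrete}, so that the corollary reduces to an application of that theorem with barrier certificate $B(t,b)$. To execute this, I will verify each of the three conditions in turn using the generalized $S$-procedure / Positivstellensatz flavor of reasoning that underlies SOS programming, combined with the fact that the initial set and the unsafe set are semi-algebraic and that $\Sigma[\cdot]$ consists of non-negative polynomials.

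For \eqref{eq:setssos1}, I would note that any $b_T \in \mathcal{B}_u$ satisfies $\sum_{q\in Q_s} b_T(q) - \lambda > 0$, and since $p^u \in \Sigma[b]$ is non-negative, the product $p^u(b_T)\bigl(\sum_{q\in Q_s}b_T(q) - \lambda\bigr)$ is non-negative on $\mathcal{B}_u$. Rearranging the SOS membership \eqref{eq:setssos1} then gives $B(T,b_T) \geq s_1 + p^u(b_T)\bigl(\sum_{q\in Q_s}b_T(q) - \lambda\bigr) \geq s_1 > 0$ for every $b_T \in \mathcal{B}_u$, recovering \eqref{equation:barrier-condition1}. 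For \eqref{eq:setssos2}, I would use that each $l_i^0(b_0) \leq 0$ on $\mathcal{B}_0$ and each $p_i^0$ is SOS, so $\sum_{i=1}^{n_0} p_i^0(b_0) l_i^0(b_0) \leq 0$; combining with the SOS certificate gives $-B(0,b_0) \geq s_2 - \sum_{i} p_i^0(b_0) l_i^0(b_0) \geq s_2 > 0$, i.e., $B(0,b_0) < 0$ on $\mathcal{B}_0$, which is \eqref{equation:barrier-condition11}.

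The main obstacle, and the reason for the somewhat awkward-looking form of \eqref{eq:setssos3}, is the third condition: because the POMDP belief update $f_a = S_a/R_a$ is a rational function, the expression $B(t,f_a(b_{t-1})) - B(t-1,b_{t-1})$ is not a polynomial, so it cannot be handled directly by SOS certification. The standard workaround is to clear denominators. Since $B$ has total degree $d$ in $b$, substituting $S_a/R_a$ yields $B\bigl(t, S_a(b_{t-1})/R_a(b_{t-1})\bigr) = Q(t,b_{t-1})/R_a(b_{t-1})^d$ for some polynomial $Q$, so multiplying the difference by $R_a(b_{t-1})^d$ produces a polynomial expression amenable to SOS. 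I would then argue that for any physically meaningful belief $b_{t-1}$ in the probability simplex and any action--observation pair actually produced by the POMDP, the normalizer $R_a(b_{t-1}) = \sum_{q'} O(q',a,z)\sum_{q} T(q,a,q') b_{t-1}(q)$ is strictly positive (otherwise the observation $z$ has zero probability and the update \eqref{eq:belief-update-rational} is not invoked). Hence $R_a(b_{t-1})^d > 0$, and the SOS certificate in \eqref{eq:setssos3} implies $B(t, f_a(b_{t-1})) - B(t-1,b_{t-1}) \leq 0$ for all $t \in \{1,\ldots,T\}$ and all $a \in A$, which is exactly \eqref{equation:barrier-condition2}.

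With the three barrier conditions of Theorem~\ref{theorem-barrier-discrete} in force for the choice $\mathcal{B}_u = \{b : \sum_{q \in Q_s} b(q) > \lambda\}$, that theorem yields that no trajectory of the POMDP belief dynamics starting in $\mathcal{B}_0$ can enter $\mathcal{B}_u$ at any $t \in \{1,\ldots,T\}$, which is precisely the privacy requirement \eqref{equation:privacy requirement} on that time horizon. The routine bookkeeping I would omit from the main proof is the explicit polynomial identity $R_a^d \cdot B(t, S_a/R_a) \in \mathcal{R}[t,b_{t-1}]$; the only subtle point worth flagging is the implicit assumption $R_a(b_{t-1}) > 0$ on the reachable belief set, which is why the multiplier $R_a^d$ preserves the sign of the inequality.
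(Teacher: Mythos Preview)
Your proposal is correct and follows essentially the same route as the paper: you show that each SOS condition \eqref{eq:setssos1}--\eqref{eq:setssos3} certifies the corresponding barrier inequality \eqref{equation:barrier-condition1}--\eqref{equation:barrier-condition2} (via the $S$-procedure/Positivstellensatz reasoning the paper invokes through Propositions~\ref{chesip} and~\ref{spos}), clear the rational denominator by multiplying through by $R_a^d>0$, and then appeal to Theorem~\ref{theorem-barrier-discrete}. Your write-up is in fact more explicit than the paper's own proof, which simply cites the appendix propositions for the first two conditions and states the positivity of $R_a$ for the third.
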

\begin{proof}
SOS conditions~\eqref{eq:setssos1} and~\eqref{eq:setssos2} are a direct application of Propositions~\ref{chesip} and~\ref{spos} in Appendix~A to verify conditions \eqref{equation:barrier-condition1} and \eqref{equation:barrier-condition11}, respectively. Furthermore, condition~\eqref{equation:barrier-condition2} for system \eqref{eq:belief-update-rational} can be re-written as  
$$
B\left(t,\frac{S_a\left( b_{t-1} \right)}{R_a\left( b_{t-1} \right)} \right) - B(t-1,b_{t-1})>0.
$$
Given the fact that $R_a\left( b_{t-1}(q') \right)$ is a positive polynomial of degree one, we can relax the above inequality into an SOS condition given by
\begin{multline}
- {R_a\left( b_{t-1} \right)}^d\left(B\left(t,\frac{S_a\left( b_{t-1} \right)}{R_a\left( b_{t-1} \right)} \right) - B\left(t-1,b_{t-1} \right) \right)  \\\in \Sigma[t,b_{t-1}]. \nonumber
\end{multline}
Hence, if ~\eqref{eq:setssos3} holds, then~\eqref{equation:barrier-condition2}  is satisfied as well. Then, from Theorem~\ref{theorem-barrier-discrete}, we infer that there is no $b_t(q)$ at time $T$ such that $b_0(q) \in \mathcal{B}_0$ and  $\sum_{q\in Q_s}b_T(q)> \lambda$. Equivalently, the privacy requirement is satisfied at time $T$. That is, $\sum_{q\in Q_s}b_T(q)\le \lambda$.
\end{proof}

We can also verify privacy for all time for a given POMDP, which is based on Corollary~\ref{theorem-barrier-discrete-alltime}.

\begin{figure*}[t!]\label{fig:mdp}
	\centering	
	\begin{tikzpicture}[shorten >=1pt,node distance=4cm,on grid,auto, thick,scale=.75, every node/.style={transform shape}]
	\node[state] (q_1)   {$q_1$};
	\node[state] (q_2) [below left = 6cm of q_1] {$q_2$};
	\node[state] (q_3) [below right =6cm of q_1] {$q_3$};
	
	\path[->]
	(q_1) edge [pos=0.5, loop, above=0.1] node  {$\sigma_1,0.15;\sigma_2,0.25$} (q_1)
	(q_1) edge [pos=0.5, bend right, above=0.5,sloped] node {$\sigma_1,0.45;\sigma_2,0.25$} (q_2)
	(q_1) edge [pos=0.5, bend left, above=0.5,sloped] node {$\sigma_1,0.4;\sigma_2,0.5$} (q_3)
	
	(q_2) edge [pos=0.5, loop left, left=0.1] node  {$\sigma_1,0.2;\sigma_2,0.1$} (q_2)
	(q_2) edge [pos=0.5, bend right, above=0.5,sloped] node {$\sigma_1,0.2;\sigma_2,0.35$} (q_1)
	(q_2) edge [pos=0.5, above=0.5] node {$\sigma_1,0.6;\sigma_2,0.55$} (q_3)
	
	(q_3) edge [pos=0.5, loop right, right=0.1] node {$\sigma_1,0.5;\sigma_2,0.4$} (q_3)
	(q_3) edge [pos=0.5, bend left, above=0.5,sloped] node {$\sigma_1,0.3;\sigma_2,0.1$} (q_1)
	(q_3) edge [pos=0.5, bend left, above=0.5,sloped] node {$\sigma_1,0.2;\sigma_2,0.5$} (q_2)
	;
	
	\end{tikzpicture}
%
%
%
%
%
%
%
	
	\caption{The MDP in Example I}

\end{figure*}
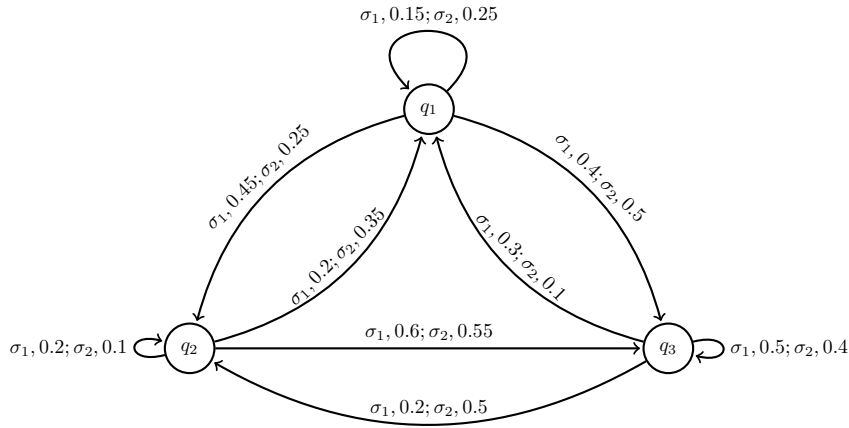

\begin{cor}\label{Cor:4}
Consider the POMDP belief update dynamics~\eqref{eq:belief-update-rational}, the privacy unsafe set~\eqref{eq:privacy-unsafe-set}, and the set of initial beliefs \eqref{eq:intitial-belief-semialgebraic}. If there is exist polynomial functions $B \in \mathcal{R}[b]$ with degree $d$,   $p^u \in {\Sigma}[b]$,  $p_i^0 \in {\Sigma}[b]$, $i = 1,2,\ldots, n_0$, and constants $s_1,s_2>0$ such that
\begin{equation}\label{eq:setssos-1}
B\left(b\right) -  p^u(b) \left(  \sum_{q\in Q_s}b(q) - \lambda \right)- s_1 \in \Sigma \left[b\right],
\end{equation}
\begin{equation}\label{eq:setssos-2}
-B\left(b_0\right) + \sum_{i=1}^{n_0} p_i^0(b_0) l_i^0(b_0) - s_2 \in \Sigma \left[b_0\right],
\end{equation}
and 
\begin{multline}\label{eq:setssos-3}
- {R_a\left( b_{t-1} \right)}^d\left(B\left(\frac{S_a\left( b_{t-1} \right)}{R_a\left( b_{t-1} \right)} \right) - B(b_{t-1}) \right)  \in \Sigma[b_{t-1}],
\end{multline}
then the privacy requirement~\eqref{equation:privacy requirement} is satisfied for all time.
\end{cor}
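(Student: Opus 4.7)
The plan is to mirror the proof of the preceding (time-indexed) corollary but invoke Corollary~\ref{theorem-barrier-discrete-alltime} instead of Theorem~\ref{theorem-barrier-discrete}, since here the candidate barrier $B$ depends only on $b$ and not on $t$. Concretely, I would show that the three SOS conditions \eqref{eq:setssos-1}--\eqref{eq:setssos-3} respectively imply the three pointwise inequalities \eqref{equation:barrier-condition--1}--\eqref{equation:barrier-condition--2}, after which Corollary~\ref{theorem-barrier-discrete-alltime} delivers the conclusion.

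For \eqref{eq:setssos-1}, I would argue as follows: if $b \in \mathcal{B}_u$, then by the definition~\eqref{eq:privacy-unsafe-set} we have $\sum_{q\in Q_s} b(q) - \lambda > 0$. Since $p^u \in \Sigma[b]$ is nonnegative and $s_1 > 0$, the SOS decomposition in \eqref{eq:setssos-1} forces $B(b) \ge s_1 + p^u(b)(\sum_{q \in Q_s}b(q) - \lambda) > 0$ on $\mathcal{B}_u$, which is exactly \eqref{equation:barrier-condition--1}. This is the standard generalized S-procedure / Positivstellensatz relaxation recalled in Proposition~\ref{chesip} of Appendix~A. The argument for \eqref{eq:setssos-2} is symmetric: for any $b_0 \in \mathcal{B}_0$ the defining inequalities give $l_i^0(b_0) \le 0$, so $\sum_i p_i^0(b_0) l_i^0(b_0) \le 0$; combined with $s_2 > 0$ and the SOS decomposition this yields $-B(b_0) \ge s_2 - \sum_i p_i^0(b_0) l_i^0(b_0) \ge s_2 > 0$, i.e.\ $B(b_0) < 0$ on $\mathcal{B}_0$, which is \eqref{equation:barrier-condition--11}.

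The main obstacle is the difference condition \eqref{equation:barrier-condition--2}, because the POMDP belief update \eqref{eq:belief-update-rational} is rational rather than polynomial. The key observation is that $R_a(b_{t-1}) = \sum_{q' \in Q} O(q',a,z)\sum_{q\in Q} T(q,a,q') b_{t-1}(q)$ is a nonnegative linear (hence polynomial) function on the belief simplex, so ${R_a(b_{t-1})}^d \ge 0$ for any $d \in \mathbb{Z}_{\ge 0}$. Multiplying the (rational) inequality
\[
B\!\left(\tfrac{S_a(b_{t-1})}{R_a(b_{t-1})}\right) - B(b_{t-1}) \le 0
\]
by $-{R_a(b_{t-1})}^d$, where $d$ is chosen large enough (namely, the degree of $B$) to clear all denominators produced by substitution into $B$, converts it into an inequality between polynomials. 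The SOS condition \eqref{eq:setssos-3} is exactly the certificate that this polynomial is nonnegative. Since $R_a(b_{t-1})^d \ge 0$, nonnegativity of the product implies the original difference is nonpositive wherever $R_a(b_{t-1}) > 0$, which covers all beliefs along any admissible run of the POMDP. Hence \eqref{equation:barrier-condition--2} holds for every $a \in A$.

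Having verified the three hypotheses of Corollary~\ref{theorem-barrier-discrete-alltime}, I would conclude that no trajectory of \eqref{eq:belief-update-rational} starting in $\mathcal{B}_0$ can reach $\mathcal{B}_u$ under any sequence of actions and observations. By the definition~\eqref{eq:privacy-unsafe-set} of $\mathcal{B}_u$, this is precisely the statement that $\sum_{q\in Q_s} b_t(q) \le \lambda$ for all $t$, i.e.\ the privacy requirement \eqref{equation:privacy requirement} is satisfied for all time. The only subtlety worth flagging is the choice of the clearing exponent $d$: it must be at least the degree of $B$ so that ${R_a(b_{t-1})}^d \cdot B(S_a/R_a)$ is a genuine polynomial in $b_{t-1}$, which is why the statement ties $d$ to $\deg B$.
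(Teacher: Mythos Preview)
Your proposal is correct and follows exactly the approach the paper intends: the paper does not spell out a separate proof for Corollary~\ref{Cor:4} but merely remarks that it ``is based on Corollary~\ref{theorem-barrier-discrete-alltime},'' i.e., one repeats the argument of the preceding (time-indexed) corollary verbatim with the time-invariant barrier conditions \eqref{equation:barrier-condition--1}--\eqref{equation:barrier-condition--2} in place of \eqref{equation:barrier-condition1}--\eqref{equation:barrier-condition2}. Your handling of the three SOS conditions---Positivstellensatz/S-procedure for \eqref{eq:setssos-1}--\eqref{eq:setssos-2} and clearing the rational denominator via $R_a^d$ for \eqref{eq:setssos-3}---matches the paper's own reasoning.
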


\section{Numerical Example:\\ Privacy in an Inventory Management System} \label{sec:privacy-example}

In this section, we illustrate the proposed privacy verification method by applying it to an inventory management system. The numerical experiments are carried out on a MacBook Pro 2.9GHz Intel Core i5 and 8GB of RAM. The SDPs are solved using YALMIP~\cite{YALMIP} and the SOSPs are solved using the SOSTOOLs~\cite{PAVPSP13} parser and solvers such as Sedumi~\cite{Stu98}.

\subsection{Example I}
We use the same example from \cite{wu2018privacy}. Suppose the MDP $\mathcal{M}$ has three states $Q=\{q_1,q_2,q_3\}$ representing different inventory levels of a company. The states $q_2,q_3\in Q_s$ correspond to the low and high inventory levels, respectively, and are the secret states. If the intruder, say  a competitor or a supplier, has  information over the current inventory levels being high or low, they may manipulate the price of the goods, and thus negatively affect the company's profit. Therefore, it is of the company's interest to conceal the inventory levels from the potential intruders.  $q_1$ is a non-secret state representing the normal inventory level. $A = \{\sigma_1,\sigma_2\}$ represents two different actions denoting different purchasing quantities. The initial condition is  $\pi(s_1)=0.1,\pi(s_2)=0.2,\pi(s_3)=0.2$. The transition probabilities are as shown in the following matrices for action $\sigma_1$ and $\sigma_2$,  $H_{\sigma_a}(i,j)=T(q_j,\sigma,q_i)$.
\begin{equation}
H_{\sigma_1}=\begin{bmatrix}
   0.15, &0.2 ,&0.3 \\
   0.45,&0.2,&0.2 \\
   0.4,&0.6,&0.5
\end{bmatrix},
H_{\sigma_2}=\begin{bmatrix}
   0.25, &0.35, &0.1 \\
   0.25, &0.1,    &0.5 \\
   0.5, &0.55,  &0.4
\end{bmatrix}    
\end{equation}
The randomness of the inventory level after the purchasing action is due to the random demand levels. The privacy requirement is 
\begin{equation}\label{equation:privacy requirement example}
    b_t(q_2)+b_t(q_3)\leq \gamma,\forall t.
\end{equation}
Based on Corollary~\ref{cor:LMI-privacy}, we check whether the above privacy requirement is satisfied for $\gamma=0.85$. The SDPs~\eqref{eq:LMI1} to \eqref{eq:LMI3} are solved certifying that the privacy requirement~\eqref{equation:privacy requirement example} is satisfied, where we found the following barrier certificate (up to 0.01 precision) in 2.5803 seconds
\begin{multline*}
B(\bar{b}) = \\ \begin{bmatrix} b(q_1) \\ b(q_2) \\ b(q_3) \\ 1 \end{bmatrix}^T ~\begin{bmatrix} 2.98 & -0.83 & -0.61 & 0 \\  -0.83 & 0.07 & 3.89 & 0.92 \\  -0.61 & 3.89 & -1.33 & -0.74 \\ 0 & 0.92 & -0.74 & 1.72   \end{bmatrix}~\begin{bmatrix} b(q_1) \\ b(q_2) \\ b(q_3) \\ 1 \end{bmatrix}.
\end{multline*}
Therefore, the high and low inventory levels are private. Furthermore, in order to find the best achievable privacy requirement, we decrease $\gamma$ and search for a barrier certificate based on Corollary~\ref{cor:LMI-privacy}. We could find the smallest value for $\gamma^* = 0.42$ below which no certificate for privacy verification could  be found. 

\subsection{Example II}
Following our MDP example, besides the purchasing action, the intruder may also have access to the intervals between the two consecutive purchases, which suggests a POMDP $\mathcal{P}$ model that has the same state space $Q$, initial condition $\pi$, action set $A$, transition probabilities $T$. Additionally, $\mathcal{P}$ has the observation set $Z=\{z_0,z_1\}$ which represents a short and a long purchasing intervals respectively. The observation function is defined as below where  $O_{\sigma}(i,j)=O(q_i,\sigma,z_j)$
\begin{equation}\label{equation:observation probability}
O_{\sigma_1}=\begin{bmatrix}
   0.7, &0.3 \\
   0.5,&0.5 \\
   0.8,&0.2
\end{bmatrix},
O_{\sigma_2}=\begin{bmatrix}
   0.8, &0.2 \\
   0.6,&0.4 \\
   0.2,&0.8
\end{bmatrix}.
\end{equation}
The privacy requirement is  (\ref{equation:privacy requirement example}) with $\gamma =0.42$ to make sure that the inventory level being too high or too low is not disclosed with confidence larger than $0.42$. We check the SOSPs~\eqref{eq:setssos-1} to \eqref{eq:setssos-3} where fix the degree $d$ of the barrier certificate. We could not find a certificate for privacy even for $d=10$. In order to find an upper-bound on the achievable privacy requirement, we increase the degree of the barrier certificates from $2$ to $10$ and look for the smallest value of $\gamma$, for which privacy verification could be certified. Table~\ref{t11} outlines the obtained results. As it can be observed from the table, by increasing the degree of the barrier certificate, we can find a tighter upper-bound on the best achievable privacy level. The barrier certificate of degree $2$ (excluding terms smaller than $10^{-4}$) constructed using Corollary~\ref{Cor:4} is provided below
\begin{multline*}
B(b) = 0.1629 b(q_1)^2 - 3.9382b(q_2)^2 + 09280 b(q_3)^2 \\
- 0.0297 b(q_1) b(q_2) - 4.4451 b(q_2) b(q_3) - 0.0027 b(q_1) 
\\- 2.0452 b(q_2) + 9.2633.
\end{multline*}
\begin{table}[!t]
\caption{Numerical results for Example II.}
\label{t11}
\centering
\begin{tabular}{c||c|c|c|c|c}

\bfseries $d$ & 2 & 4 & 6 & 8 & 10   \\
\hline
\bfseries $\gamma^*$ & 0.93 & 0.88 & 0.80 & 0.74 & 0.69    \\
\hline
\bfseries Computation Time (s) & 5.38 & 8.37 & 12.03 & 18.42 & 27.09    \\
\end{tabular}
\end{table}


%
%
%
%
%

\section{CONCLUSIONS AND FUTURE WORK} \label{sec:conclusions}

We proposed a method for privacy verification of MDPs and POMDPs based on barrier certificates. We demonstrated that the privacy verification can be carried out in terms of an SDP problem for MDPs and an SOSP problem for POMDPs. The method was applied to the privacy verification problem of an inventory management system.

The formulation presented here assumes a unified barrier certificate for all actions $a \in A$. A more conservative but more computationally efficient approach to address the privacy verification problem of MDPs and POMDPs is to consider non-smooth barrier certificates, which are composed of a the convex hull, max, or min a set of local barrier certificates for different actions~\cite{7937882,ahmadi2018controller}.  

In addition to privacy verification, the proposed method based on barrier certificates can be used to design a sequence of actions such that  some given privacy requirement is satisfied. To this end, we follow the footsteps of the contributions on synthesizing switching sequences such that some cost is minimized~\cite{7908990}.

%


\bibliography{references}

\begin{thebibliography}{10}
\providecommand{\url}[1]{#1}
\csname url@samestyle\endcsname
\providecommand{\newblock}{\relax}
\providecommand{\bibinfo}[2]{#2}
\providecommand{\BIBentrySTDinterwordspacing}{\spaceskip=0pt\relax}
\providecommand{\BIBentryALTinterwordstretchfactor}{4}
\providecommand{\BIBentryALTinterwordspacing}{\spaceskip=\fontdimen2\font plus
\BIBentryALTinterwordstretchfactor\fontdimen3\font minus
  \fontdimen4\font\relax}
\providecommand{\BIBforeignlanguage}[2]{{%
\expandafter\ifx\csname l@#1\endcsname\relax
\typeout{** WARNING: IEEEtran.bst: No hyphenation pattern has been}%
\typeout{** loaded for the language `#1'. Using the pattern for}%
\typeout{** the default language instead.}%
\else
\language=\csname l@#1\endcsname
\fi
#2}}
\providecommand{\BIBdecl}{\relax}
\BIBdecl

\bibitem{mcdaniel2009security}
P.~McDaniel and S.~McLaughlin, ``Security and privacy challenges in the smart
  grid,'' \emph{IEEE Security \& Privacy}, vol.~7, no.~3, 2009.

\bibitem{sahi2018privacy}
M.~A. Sahi, H.~Abbas, K.~Saleem, X.~Yang, A.~Derhab, M.~A. Orgun, W.~Iqbal,
  I.~Rashid, and A.~Yaseen, ``Privacy preservation in e-healthcare
  environments: State of the art and future directions,'' \emph{IEEE Access},
  vol.~6, pp. 464--478, 2018.

\bibitem{courtney2008privacy}
K.~L. Courtney, ``Privacy and senior willingness to adopt smart home
  information technology in residential care facilities,'' 2008.

\bibitem{hubaux2004security}
J.-P. Hubaux, S.~Capkun, and J.~Luo, ``The security and privacy of smart
  vehicles,'' \emph{IEEE Security \& Privacy}, vol.~2, no.~3, pp. 49--55, 2004.

\bibitem{wu2013comparative}
Y.-C. Wu and S.~Lafortune, ``Comparative analysis of related notions of opacity
  in centralized and coordinated architectures,'' \emph{Discrete Event Dynamic
  Systems}, vol.~23, no.~3, pp. 307--339, 2013.

\bibitem{jacob2016overview}
R.~Jacob, J.-J. Lesage, and J.-M. Faure, ``Overview of discrete event systems
  opacity: Models, validation, and quantification,'' \emph{Annual Reviews in
  Control}, vol.~41, pp. 135--146, 2016.

\bibitem{wu2018privacy}
B.~Wu and H.~Lin, ``Privacy preserving controller synthesis via belief
  abstraction,'' \emph{arXiv preprint arXiv:1802.10051}, 2018.

\bibitem{cassandra}
A.~R~Cassandra, ``A survey of pomdp applications,'' vol. 1724, January 1998.

\bibitem{chatterjee2016decidable}
K.~Chatterjee, M.~Chmel{\'\i}k, and M.~Tracol, ``What is decidable about
  partially observable markov decision processes with $\omega$-regular
  objectives,'' \emph{Journal of Computer and System Sciences}, vol.~82, no.~5,
  pp. 878--911, 2016.

\bibitem{junges2017permissive}
S.~Junges, N.~Jansen, R.~Wimmer, T.~Quatmann, L.~Winterer, J.-P. Katoen, and
  B.~Becker, ``Permissive finite-state controllers of pomdps using parameter
  synthesis,'' \emph{arXiv preprint arXiv:1710.10294}, 2017.

\bibitem{yao2015privacy}
J.~Yao and P.~Venkitasubramaniam, ``The privacy analysis of battery control
  mechanisms in demand response: Revealing state approach and rate distortion
  bounds,'' \emph{IEEE Transactions on Smart Grid}, vol.~6, no.~5, pp.
  2417--2425, 2015.

\bibitem{li2016privacy}
Z.~Li, T.~J. Oechtering, and M.~Skoglund, ``Privacy-preserving energy flow
  control in smart grids,'' in \emph{Acoustics, Speech and Signal Processing
  (ICASSP), 2016 IEEE International Conference on}.\hskip 1em plus 0.5em minus
  0.4em\relax IEEE, 2016, pp. 2194--2198.

\bibitem{Guguen200925}
H.~Gu\'eguen, M.~Lefebvre, J.~Zaytoon, and O.~Nasri, ``Safety verification and
  reachability analysis for hybrid systems,'' \emph{Annual Reviews in Control},
  vol.~33, no.~1, pp. 25 -- 36, 2009.

\bibitem{1215682}
C.~J. Tomlin, I.~Mitchell, A.~M. Bayen, and M.~Oishi, ``Computational
  techniques for the verification of hybrid systems,'' \emph{Proceedings of the
  {IEEE}}, vol.~91, no.~7, pp. 986--1001, July 2003.

\bibitem{Prajna2006117}
S.~Prajna, ``Barrier certificates for nonlinear model validation,''
  \emph{Automatica}, vol.~42, no.~1, pp. 117 -- 126, 2006.

\bibitem{7402508}
S.~Han, U.~Topcu, and G.~J. Pappas, ``A sublinear algorithm for
  barrier-certificate-based data-driven model validation of dynamical
  systems,'' in \emph{2015 54th IEEE Conference on Decision and Control (CDC)},
  Dec 2015, pp. 2049--2054.

\bibitem{AHMADI201733}
M.~Ahmadi, G.~Valmorbida, and A.~Papachristodoulou, ``Safety verification for
  distributed parameter systems using barrier functionals,'' \emph{Systems \&
  Control Letters}, vol. 108, pp. 33 -- 39, 2017.

\bibitem{ahmadi2016optimization}
M.~Ahmadi, A.~W.~K. Harris, and A.~Papachristodoulou, ``An optimization-based
  method for bounding state functionals of nonlinear stochastic systems,'' in
  \emph{Decision and Control (CDC), 2016 {IEEE} 55th Conference on}.\hskip 1em
  plus 0.5em minus 0.4em\relax IEEE, 2016, pp. 5342--5347.

\bibitem{7857061}
L.~Wang, A.~D. Ames, and M.~Egerstedt, ``Safety barrier certificates for
  collisions-free multirobot systems,'' \emph{IEEE Transactions on Robotics},
  vol.~33, no.~3, pp. 661--674, June 2017.

\bibitem{7236867}
R.~Wisniewski and C.~Sloth, ``Converse barrier certificate theorems,''
  \emph{IEEE Transactions on Automatic Control}, vol.~61, no.~5, pp.
  1356--1361, May 2016.

\bibitem{puterman2014markov}
M.~L. Puterman, \emph{Markov decision processes: discrete stochastic dynamic
  programming}.\hskip 1em plus 0.5em minus 0.4em\relax John Wiley \& Sons,
  2014.

\bibitem{shani2013survey}
G.~Shani, J.~Pineau, and R.~Kaplow, ``A survey of point-based pomdp solvers,''
  \emph{Autonomous Agents and Multi-Agent Systems}, vol.~27, no.~1, pp. 1--51,
  2013.

\bibitem{Hauskrecht2000}
M.~Hauskrecht, ``Value-function approximations for partially observable
  {M}arkov decision processes,'' \emph{Journal of Artificial Intelligence
  Research}, vol.~13, no.~1, pp. 33--94, Aug. 2000.

\bibitem{S003614450444614X}
I.~Polik and T.~Terlaky, ``A survey of the {S}-lemma,'' \emph{SIAM Review},
  vol.~49, no.~3, pp. 371--418, 2007.

\bibitem{Johanssonthesis}
M.~Johansson, ``Piecewise linear control systems,'' Ph.D. dissertation, Lund
  Institute of Technology, 1999.

\bibitem{YALMIP}
\BIBentryALTinterwordspacing
J.~L\"{o}fberg, ``Yalmip : A toolbox for modeling and optimization in
  {MATLAB},'' in \emph{Proceedings of the CACSD Conference}, Taipei, Taiwan,
  2004. [Online]. Available:
  \url{http://control.ee.ethz.ch/\~{}joloef/yalmip.php}
\BIBentrySTDinterwordspacing

\bibitem{PAVPSP13}
S.~Prajna, A.~Papachristodoulou, P.~Seiler, and P.~Parrilo, ``{SOSTOOLS}: Sum
  of squares optimization toolbox for {MATLAB} {V}3.00,'' 2013.

\bibitem{Stu98}
J.~F. Sturm, ``Using sedumi 1.02, a matlab toolbox for optimization over
  symmetric cones,'' 1998.

\bibitem{7937882}
P.~Glotfelter, J.~Cort\'es, and M.~Egerstedt, ``Nonsmooth barrier functions
  with applications to multi-robot systems,'' \emph{IEEE Control Systems
  Letters}, vol.~1, no.~2, pp. 310--315, Oct 2017.

\bibitem{ahmadi2018controller}
M.~Ahmadi, A.~Israel, and U.~Topcu, ``Controller synthesis for safety of
  physically-viable data-driven models,'' \emph{arXiv preprint
  arXiv:1801.04072}, 2018.

\bibitem{7908990}
B.~Stellato, S.~Ober-Bl\"obaum, and P.~J. Goulart, ``Second-order switching
  time optimization for switched dynamical systems,'' \emph{IEEE Transactions
  on Automatic Control}, vol.~62, no.~10, pp. 5407--5414, Oct 2017.

\bibitem{Par00}
P.~Parrilo, ``Structured semidefinite programs and semialgebraic geometry
  methods in robustness and optimization,'' Ph.D. dissertation, California
  Institute of Technology, 2000.

\bibitem{choi1995sums}
M.~Choi, T.~Y. Lam, and B.~Reznick, ``Sums of squares of real polynomials,'' in
  \emph{Proceedings of Symposia in Pure mathematics}, vol.~58.\hskip 1em plus
  0.5em minus 0.4em\relax American Mathematical Society, 1995, pp. 103--126.

\bibitem{CTVG99}
G.~Chesi, A.~Tesi, A.~Vicino, and R.~Genesio, ``On convexification of some
  minimum distance problems,'' in \emph{5th {E}uropean {C}ontrol {C}onference},
  Karlsruhe, Germany, 1999.

\bibitem{NS08}
M.~Nie, J.and~Schweighofer, ``On the complexity of {P}utinar's
  positivstellensatz,'' \emph{Journal of Complexity}, vol.~23, no.~1, pp.
  135--150, 2007.

\bibitem{Las09}
J.~B. Lasserre, \emph{Moments, Positive Polynomials and Their
  Applications}.\hskip 1em plus 0.5em minus 0.4em\relax Imperial College Press,
  London, 2009.

\bibitem{chesi2010lmi}
G.~Chesi, ``{LMI} techniques for optimization over polynomials in control: a
  survey,'' \emph{IEEE Transactions on Automatic Control}, vol.~55, no.~11, pp.
  2500--2510, 2010.

\end{thebibliography}
\bibliographystyle{IEEEtran}


\appendix

\subsection{Sum-of-Squares Polynomials} \label{app:SOS}
A polynomial $p(x)$ is a sum-of-squares polynomial if $\exists p_i(x) \in \mathcal{R}[x]$, $i \in \{1, \ldots, n_d\}$ such that $p(x) = \sum_i p_i^2(x)$. Hence $p(x)$ is clearly non-negative. A set of polynomials $p_i$ is called \emph{SOS decomposition} of $p(x)$. The converse does not hold in general, that is, there exist non-negative polynomials which do not have an SOS decomposition~\cite{Par00}.  The computation of SOS decompositions, can be cast as an SDP (see~\cite{choi1995sums,Par00,CTVG99}). The Theorem below proves that, in sets satisfying a property stronger than compactness, any positive polynomial can be expressed as a combination of sum-of-squares polynomials and polynomials describing the set.  

For a set of polynomials $\bar{g} = \{g_1(x), \ldots, g_m(x)\}$, $m \in \nt$, the \emph{quadratic module} generated by $m$ is 
\begin{equation}
M(\bar{g}):= \left\lbrace \sigma_0 +\sum_{i = 1}^{m} \sigma_i g_i | \sigma_i \in \Sigma[x]\right\rbrace.
\end{equation}
A quadratic module $M\in \mathcal{R}[x]$ is said \emph{archimedean} if $\exists N \in \nt $ such that $$N - |x|^2 \in M.$$
An archimedian set is always compact~\cite{NS08}. It is the possible to state~\cite[Theorem 2.14]{Las09}
\begin{thm}[Putinar Positivstellensatz]
\label{thm:Psatz}
Suppose the quadratic module $M(\bar{g})$ is archimedian. Then for every $f \in \mathcal{R}[x]$, $$f>0~\forall~x\in \{x | g_1(x)\geq 0, \ldots, g_m(x)\geq 0 \} \Rightarrow f \in (\bar{g}).$$
\end{thm}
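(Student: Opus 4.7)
The plan is to argue by contradiction, combining a Hahn--Banach separation of $f$ from the quadratic module with a multivariate moment representation theorem that turns the algebraic conclusion $f \in M(\bar{g})$ into an analytic statement about a measure on $K = \{x \in \real^n : g_1(x) \ge 0,\ldots, g_m(x) \ge 0\}$. Suppose $f > 0$ on $K$ but $f \notin M := M(\bar{g})$; I will produce a Borel probability measure $\mu$ supported on $K$ with $\int_K f\, d\mu \le 0$, contradicting strict positivity of $f$ on the support of $\mu$.

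First I would separate $f$ from the cone. Since $M$ is a convex cone in the real vector space $\mathcal{R}[x]$ containing $1$, and $f \notin M$, a geometric Hahn--Banach argument (performed on each finite-dimensional space of polynomials of bounded degree, with closure of $M$ in that topology supplied by the archimedean bound below) yields a nonzero linear functional $L : \mathcal{R}[x] \to \real$ with $L(q) \ge 0$ for all $q \in M$ and $L(f) \le 0$. Normalize $L(1) = 1$. In particular $L(p^2) \ge 0$ and $L(p^2 g_i) \ge 0$ for every $p \in \mathcal{R}[x]$ and every $i \in \{1,\ldots,m\}$, so $L$ is a positive functional on the quadratic module.

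Next I would use the archimedean hypothesis to tame $L$ and extract a measure. By assumption there exists $N \in \nt$ with $N - |x|^2 \in M$, so $L(N - |x|^2) \ge 0$ and hence $\sum_i L(x_i^2) \le N$. Iterating with $L(p^2 (N - |x|^2)) \ge 0$ for $p(x) = x^\alpha$ propagates a uniform bound of the form $L(x^{2\alpha}) \le N^{|\alpha|}$, and the Cauchy--Schwarz inequality $|L(pq)|^2 \le L(p^2) L(q^2)$ (valid since $L$ is nonnegative on squares) upgrades $L$ to a continuous functional on polynomials in the sup-norm over any ball containing $K$. Haviland's theorem, applied to the archimedean quadratic module $M$, then supplies a Borel probability measure $\mu$ on $K$ with $L(p) = \int_K p\, d\mu$ for all $p \in \mathcal{R}[x]$. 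But then $L(f) = \int_K f\, d\mu > 0$ since $f > 0$ on $K$ and $\mu(K) = 1$, contradicting $L(f) \le 0$; so no separating $L$ exists and $f \in M$.

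The main obstacle, and the only genuinely non-elementary ingredient, is the step from nonnegativity of $L$ on the quadratic module $M$ to the existence of a representing measure on $K$. The archimedean condition is exactly the lever: it gives the uniform moment bounds that force the multivariate Hamburger moment problem to be determinate and that localize the support of the resulting measure to $K$ rather than to all of $\real^n$. Without archimedeanness the measure need not exist and the representation can fail, as classical counterexamples on non-compact basic semialgebraic sets show, which is why the hypothesis cannot simply be replaced by compactness of $K$.
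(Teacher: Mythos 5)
The paper does not actually prove this theorem: it is quoted verbatim from Lasserre's book (cited as \cite{Las09}, Theorem~2.14) and used as a black box to justify the SOS relaxations, so there is no in-paper argument to compare yours against. Your proposal is a sketch of the standard functional-analytic proof (separation of $f$ from the quadratic module, extraction of a representing measure on $K$, contradiction with $f>0$ on $K$), and that architecture is sound. The separation step is essentially right, though you should note that the archimedean property makes $1$ an order unit of $M$, which is what lets Eidelheit--Kakutani separation produce $L\ge 0$ on $M$ with $L(f)\le 0$; and the normalization $L(1)=1$ needs the one-line justification that $L(1)=0$ together with Cauchy--Schwarz, $|L(p)|^2\le L(1)L(p^2)$, would force $L\equiv 0$.

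The genuine gap is the measure-extraction step. Haviland's theorem does not apply as you invoke it: it requires $L(p)\ge 0$ for \emph{every} polynomial $p$ that is nonnegative on $K$, whereas all you have is $L\ge 0$ on the quadratic module $M(\bar g)$, which is in general a strictly smaller cone than $\mathrm{Pos}(K)$ --- indeed, the assertion that positivity on $K$ implies membership (up to $\varepsilon$) in $M(\bar g)$ is precisely the theorem you are trying to prove, so deducing Haviland's hypothesis from $L\ge 0$ on $M$ is circular. The correct ingredient is Putinar's solution of the $K$-moment problem for archimedean quadratic modules, which is proved by the GNS construction: the bound $L(x^{2\alpha})\le N^{|\alpha|}$ you derive shows the multiplication operators $M_{x_i}$ on the Hilbert space completion of $\mathcal{R}[x]/\{p: L(p^2)=0\}$ are bounded, commuting, and self-adjoint; the joint spectral measure yields $\mu$, and the conditions $L(p^2 g_i)\ge 0$ localize $\mathrm{supp}(\mu)$ to $\{g_i\ge 0\}$. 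With that substitution (or with an algebraic route such as Schweighofer's reduction to P\'olya's theorem, or Jacobi's representation theorem), your outline closes correctly.
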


The subsequent proposition formalizes the problem of constrained positivity of polynomials which is a direct result of applying Positivstellensatz.
\begin{prop}[\cite{chesi2010lmi}] \label{chesip}
Let $\{a_i\}_{i=1}^k$ and $\{b_i\}_{i=1}^l$ belong to $\mathcal{P}$, then
\begin{eqnarray}
p(x) \ge 0 \quad &\forall x \in \mathbb{R}^n: a_i(x)=0, \, \forall i=1,2,...,k & \nonumber \\
& \text{and} \quad b_j(x) \ge 0, \, \forall j=1,2,...,l&
\end{eqnarray}
is satisfied, if the following holds
\begin{eqnarray} \label{chesieq}
&\exists r_1,r_2,\ldots,r_k \in \mathcal{R}[x] \quad \text{and} \quad \exists s_0,s_1,\ldots,s_l \in \Sigma[x] & \nonumber \\
&p = \sum_{i=1}^k r_i a_i +\sum_{i=1}^l s_i b_i +s_0&
\end{eqnarray}
\end{prop}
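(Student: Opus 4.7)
The plan is to reduce each of the three sum-of-squares conditions to the corresponding hypothesis of Corollary \ref{theorem-barrier-discrete-alltime} applied to the belief dynamics \eqref{eq:belief-update-rational}, and then invoke that corollary to conclude. The proof skeleton mirrors the preceding (time-dependent) corollary, with the simplification that $B$ no longer depends on $t$, so I expect three short reductions followed by one invocation.

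First I would handle \eqref{eq:setssos-1}: since the left-hand side is SOS it is non-negative everywhere, so $B(b) \geq p^u(b)\bigl(\sum_{q\in Q_s} b(q) - \lambda\bigr) + s_1$ for every $b$. On the privacy unsafe set $\mathcal{B}_u$ defined by \eqref{eq:privacy-unsafe-set} the parenthetical quantity is strictly positive and $p^u$ is SOS, hence non-negative, so $B(b) \geq s_1 > 0$. This is exactly \eqref{equation:barrier-condition--1}. A symmetric S-procedure-style argument handles \eqref{eq:setssos-2}: its non-negativity gives $B(b_0) \leq -s_2 + \sum_i p_i^0(b_0)\, l_i^0(b_0)$, and on $\mathcal{B}_0$ as defined in \eqref{eq:intitial-belief-semialgebraic} each $l_i^0(b_0) \leq 0$ while $p_i^0$ is non-negative, so the sum is non-positive and $B(b_0) \leq -s_2 < 0$, which is \eqref{equation:barrier-condition--11}.

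The dynamics condition \eqref{equation:barrier-condition--2} is where the rational nature of the POMDP belief update enters and is the main obstacle. The Bayesian normalizer $R_a(b_{t-1}) = \sum_{q'\in Q} O(q',a,z_t) \sum_{q\in Q} T(q,a,q') b_{t-1}(q)$ is a degree-one polynomial that is strictly positive on the belief simplex whenever the observation $z_t$ is compatible with the action $a$, so $R_a(b_{t-1})^d > 0$ on the domain of interest. Because $B$ has degree $d$, substituting the rational map $f_a = S_a/R_a$ produces $B(S_a/R_a) = N_a(b_{t-1})/R_a(b_{t-1})^d$ for some polynomial $N_a$, and multiplying through by $R_a^d$ clears the denominator and yields the polynomial expression appearing inside \eqref{eq:setssos-3}. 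Since \eqref{eq:setssos-3} asserts this polynomial is SOS, hence non-negative, dividing by the strictly positive quantity $R_a(b_{t-1})^d$ preserves the sign and gives $B(f_a(b_{t-1})) - B(b_{t-1}) \leq 0$, which is exactly \eqref{equation:barrier-condition--2}.

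With all three hypotheses of Corollary \ref{theorem-barrier-discrete-alltime} verified for every action $a \in A$, that corollary certifies that no trajectory of the belief switched system \eqref{eq:belief-update-rational} starting in $\mathcal{B}_0$ ever enters $\mathcal{B}_u$, which is equivalent to the requirement $\sum_{q\in Q_s} b_t(q) \leq \lambda$ holding for all $t$. The main technical subtlety to flag in the write-up is precisely the sign-preserving division by $R_a^d$: strict positivity of the Bayesian normalizer on the simplex, rather than mere non-negativity, is what allows the SOS (polynomial) certificate to translate back into an inequality on the original rational vector field.
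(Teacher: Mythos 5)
You have proved the wrong statement. The proposition to be established here is Proposition~\ref{chesip}, the general fact that a polynomial identity $p = \sum_{i=1}^k r_i a_i + \sum_{i=1}^l s_i b_i + s_0$ with $s_0, s_i \in \Sigma[x]$ certifies $p(x) \ge 0$ on the set $\{x : a_i(x) = 0,\ b_j(x) \ge 0\}$. Your write-up is instead a proof of Corollary~\ref{Cor:4} (the all-time SOS privacy verification for POMDPs): you verify the three hypotheses of Corollary~\ref{theorem-barrier-discrete-alltime} from conditions \eqref{eq:setssos-1}--\eqref{eq:setssos-3} and discuss clearing the rational belief update's denominator $R_a^d$. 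That corollary \emph{uses} Proposition~\ref{chesip} as an ingredient; it is not the same claim, and nothing in your argument addresses the proposition's general setting of arbitrary polynomials $p$, $a_i$, $b_j$ on $\mathbb{R}^n$.

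The intended proof is a two-line evaluation argument, which in fact is already implicit in the first paragraph of your proposal: take any $x$ in the feasible set. Then each $a_i(x) = 0$, so the terms $r_i(x) a_i(x)$ vanish; each $s_i(x) \ge 0$ because $s_i$ is a sum of squares, and each $b_i(x) \ge 0$ by assumption, so every product $s_i(x) b_i(x)$ is non-negative; and $s_0(x) \ge 0$ as a sum of squares. Summing gives $p(x) \ge 0$. If you extract the reasoning you applied to \eqref{eq:setssos-1} and \eqref{eq:setssos-2} and state it for generic $p$, $\{a_i\}$, $\{b_j\}$, you obtain exactly this proof; as submitted, however, the proposal does not prove the statement it was asked to prove.
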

\begin{prop} \label{spos}
The multivariable polynomial $p(x)$ is strictly positive ($p(x)>0 \quad \forall x \in \mathbb{R}^n$), if there exists a $\lambda > 0$ such that
\begin{equation}
\big( p(x) - \lambda \big) \in \Sigma[x].
\end{equation}
\end{prop}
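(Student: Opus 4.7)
The plan is to unwind the definition of the SOS cone and use the strict positivity of $\lambda$ to conclude a uniform lower bound on $p(x)$. First I would invoke the definition of $\Sigma[x]$ stated in the Notation paragraph: if $p(x) - \lambda \in \Sigma[x]$, then there exist polynomials $p_1,\dots,p_k \in \mathcal{R}[x]$ such that
\begin{equation}
p(x) - \lambda = \sum_{i=1}^{k} p_i(x)^2. \nonumber
\end{equation}

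Next I would observe that each $p_i(x)^2 \ge 0$ for every $x \in \mathbb{R}^n$, so the right-hand side is nonnegative everywhere. Rearranging, this gives
\begin{equation}
p(x) \ge \lambda, \quad \forall x \in \mathbb{R}^n. \nonumber
\end{equation}
Since by hypothesis $\lambda > 0$, this chain of inequalities yields $p(x) \ge \lambda > 0$ for all $x \in \mathbb{R}^n$, which is exactly strict positivity of $p$.

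There is essentially no obstacle here; the proposition is a one-line consequence of the definition of the SOS cone together with the strict positivity of $\lambda$. The only stylistic choice is whether to mention that this is a special case of Proposition~\ref{chesip} (the unconstrained case with no equality or inequality constraints, i.e.\ $k=l=0$), but the direct argument above is shorter and self-contained, so I would simply present it as stated.
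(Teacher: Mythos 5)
Your proof is correct: unwinding the SOS decomposition of $p - \lambda$ gives $p(x) - \lambda = \sum_i p_i(x)^2 \ge 0$, hence $p(x) \ge \lambda > 0$ everywhere. The paper states this proposition without proof (evidently regarding it as immediate), and your one-line argument is exactly the intended justification, so there is nothing further to compare.
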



\end{document}